\newtheorem{definition}{Definition}
\newtheorem{theorem}{Theorem}
\newtheorem{proposition}{Proposition}
\newtheorem{lemma}{Lemma}
\renewcommand{\P}{\mathbb{P}}
\newcommand{\Sus}{\mathbb{S}}
\newcommand{\cT}{\mathcal{T}}
\newcommand{\cX}{\mathcal{X}}
\newcommand{\mX}[3]{{X^{#1}_{#2}[#3]}}
\begin{document}
\title{Finding an Infection Source under the SIS Model}

\name{Wuqiong~Luo and Wee Peng Tay}
\address{Nanyang Technological University \\50 Nanyang Avenue, Singapore 639798}


\maketitle

\begin{abstract}
We consider the problem of identifying an infection source based only on an observed set of infected nodes in a network, assuming that the infection process follows a Susceptible-Infected-Susceptible (SIS) model. We derive an estimator based on estimating the most likely infection source associated with the most likely infection path. Simulation results on regular trees suggest that our estimator performs consistently better than the minimum distance centrality based heuristic.
\end{abstract}

\begin{keywords}
Infection source estimation, SIS model, security, social networks.
\end{keywords}

\vspace{-5mm}
\section{Introduction}\label{sec:Introduction}
\vspace{-3mm}

We do not have immunity against bacterial diseases like typhoid fever, Methicillin-resistant Staphylococcus aureus (MRSA), and tuberculosis. An infected individual can become infected again with the same disease even after recovering from it. The spread of such diseases are often modeled using a Susceptible-Infected-Susceptible (SIS) model \cite{Hethcote1976,Newman2003}. In a discrete time SIS model, at each time step, the individuals who have the disease are in \textit{infected} state, and those individuals who may potentially get infected at the next time step by currently infected individuals are said to be \textit{susceptible}. An infected individual may recover from the disease and get infected again at subsequent time steps \cite{Allen1994}. A computer virus spreading in a computer network without effective anti-virus counter-measures can also be modeled using a SIS model as a computer that has been cleaned of its infection may get re-infected again \cite{Saito2012}. Opinion dynamics in a social network may also be modeled in some cases using SIS models. A individual on Twitter \cite{Cha2010} may be influenced by the opinion or posting of someone she is following, thereby becoming ``infected'' with the same opinion. She can subsequently change her opinion and become ``uninfected'' again. In all these examples, we may want to identify or estimate a subset of nodes in the network that first started the infection. In the case of a disease, identification of the infection sources or index cases aids epidemiological studies, while tracing of the sources of a computer malware helps to track down the perpetrators.

Existing works related to infection spreading in a network have primarily focused on the parameters of the diffusion process such as the outbreak thresholds and the effect of network structures \cite{Moore2000, Newman2002, Oneill2002, Ganesh2005}. Little work has been done on identifying the infection sources. One of the first works to address the infection source identification problem is \cite{Shah2011}, who consider a Susceptible-Infected (SI) model, where susceptible nodes may get infected, while infected nodes do not recover. A minimum distance centrality (DC) based estimator was proposed to identify the most likely infection source. Subsequently, \cite{Luo2012Secon, Luo2012CoRR} considers the problem of identifying multiple infection sources under the SI model, while \cite{Zhu2012} studies the single infection source identification problem for the Susceptible-Infected-Recovered (SIR) model, where an infected node may recover but can never be infected again. A computationally efficient sample path based estimator was proposed in \cite{Zhu2012} to estimate the infection source. However, as alluded to earlier, the assumption that a recovered node can never be infected again is false in a lot of practical examples.

In this paper, we study the single infection source estimation problem for an SIS model. We assume that we only observe one snapshot of the infection spreading process at some point in time, and derive an estimator that finds the source node associated with the most likely infection process that yields the observed snapshot. The estimator we derive is the same as that in \cite{Zhu2012}, which considers an SIR model, showing that the proposed estimator is relatively robust to the underlying infection and recovery process of the nodes. This is somewhat surprising as the two models are significantly different. We also note that the optimality proofs of our estimator differ significantly from that in \cite{Zhu2012}. Simulation results suggest that our estimator performs better than the minimum distance centrality based estimator \cite{Shah2011}. Our method can also be viewed as a data-driven proxy to finding the most ``influential'' node in an SIS infection network, in contrast to \cite{Saito2012}, which determines the influential nodes based on the expected number of infected nodes.

The rest of this paper is organized as follows. In Section \ref{sec:problem_formulation}, we present the SIS model and problem formulation. In Section \ref{sec:single_source_estimation}, we describe our source estimator, and present simulation results in Section \ref{sec:simulation_results} to evaluate the performance of the proposed estimator on regular trees. Finally we conclude and summarize in Section \ref{sec:conclusion}.

\vspace{-5mm}
\section{Problem Formulation}\label{sec:problem_formulation}
\vspace{-3mm}
Consider an undirected graph $G=(V,E)$, where each node is either infected or uninfected. If a node is infected, we let the state of the node be 1, and 0 otherwise. We assume that time is divided into discrete time slots, and the state of a node $v$ in time slot $t$ is given by $X_v[t]$. At time $t=0$, we assume that there is only one infected node $s^* \in V$, which we call the infection source. At the beginning of a time slot $t$, let the set of all infected nodes and their neighbors be $\Sus(t)$. We call these the susceptible nodes as they may become infected by the end of time slot $t$, while those nodes not in $\Sus(t)$ remains uninfected with probability one. Let $q \in (0,1)$ be the probability that any node $v \in \Sus(t)$ becomes infected at the end of time slot $t$, i.e., $\P(X_v[t+1] = 1) = q$ if $v\in\Sus(t)$. We also assume that the susceptible nodes become infected independently of each other. For any set $J$, let $X_J[t] = \{X_v[t]: v\in J\}$ be the collection of the states of nodes in $J$, and let $X_V [0,t] = \{ X_v[\tau] : 0\leq \tau \leq t, v \in V\}$ denote an infection \textit{path} from time $0$ to $t$.

At some time slot $t$, we observe the set of all infected nodes $V_I$, which we assume to be non-empty. We do not assume that we know the elapsed time $t$. The problem of identifying the infection source can be formulated as a maximum likelihood (ML) estimation problem by treating the infection source $s^*$ and the elapsed time $t$ as parameters to be estimated. We want to identify the node $\hat{s}_{ML} \in V$ and the time $\hat{t}_{ML}$ that maximizes the likelihood of the observed infection set $V_I$, given by
\begin{align*}
(\hat{s}_{ML},\hat{t}_{ML}) =\arg \max_{\substack{v\in V}}\sum_{X_V[0,t] \in \cX_v} \P(X_V[0,t] \mid s^* = v),
\end{align*}
where $\cX_v$ is the set of all possible infection paths starting with $v$ and resulting in $V_I$, and $\P(X_V[0,t] \mid s^* = v)$ is the likelihood of $X_V[0,t]$ given that the infection source is $v$. Unlike the infection sources identifying problem for SI model \cite{Shah2011, Luo2012Secon, Luo2012CoRR}, finding the ML estimator for the SIS model is very challenging as the set of nodes that had been infected before time $t$ is a superset of the observed $V_I$. This implies that, unlike the SI model, the most likely infection source may not be in $V_I$. In the following, we propose an approximation by finding the source node associated with the most likely infection path.

\vspace{-4mm}
\section{Infection path based estimation for regular trees}\label{sec:single_source_estimation}
\vspace{-2mm}

Assume that the underlying network $G$ is an infinite regular tree. We propose as the infection source estimate the node associated with the most likely infection process:
\begin{align}
\hat{s} = \arg \max_{v \in V} \max_{t \in \cT_v, X_V[0,t] \in \cX_v} \P(X_V[0,t] \mid s^* = v),
\label{equ:proposed_single_source_estimator}
\end{align}
where $\cT_v$ is the set of all feasible observation times if $v$ is the infection source. The same estimator has also been used for the SIR model in \cite{Zhu2012}. The estimator $\hat{s}$ in \eqref{equ:proposed_single_source_estimator} can be found in two steps. For each $v \in V$ as the infection source, we determine the most likely infection path induced by $v$ being the infection source. Then, we find $\hat{s}$ as the node that maximizes the likelihood of the most likely infection path found in the first step.

%
\vspace{-3mm}
\subsection{Most likely infection path}
\vspace{-1mm}

We start with two definitions, the first of which is borrowed from \cite{Zhu2012}.

\begin{definition} \label{def:infection_eccentricity}
Let $d(v,u)$ denote the length of the shortest path between $v$ and $u$, which is also called the distance between $v$ and $u$. Define the largest distance between $v$ and any infected node to be,
\begin{align*}
\bar{d}(v,V_I)=\max_{u \in V_I} d(v,u).
\end{align*}
We call $\bar{d}(v,V_I)$ the infection eccentricity of node $v$. Furthermore, the nodes with minimum infection eccentricity are defined as Jordan infection centers of $V_I$.
\end{definition}

\begin{definition} \label{def:most_likey_path_for_t}
For each $v\in V$ and $t \in \cT_v$, let $\mX{v}{V}{0,t} \in \cX_v$ to be the most likely infection path up to time $t$, given that $v$ is the infection source, i.e.,
\begin{align*}
\mX{v}{V}{0,t} = \arg \max_{X_V[0,t] \in \cX_v}\P(X_V[0,t] \mid s^* = v).
\end{align*}
For any set $J$, and $0 \leq i \leq j \leq t$, we let $\mX{v,t}{J}{i,j}$ be the states of nodes in $J$ during time slots $i$ to $j$, in the infection path $\mX{v}{V}{0,t}$.
\end{definition}

We use the following notations throughout this paper.
\begin{itemize}
  \item Given any $v \in V$, let $V_v(h)$ to be the set of nodes $h$ hops away from $v$. \label{def:layered_nodes}
  \item For any tree $G$ and a pair of nodes $u,v \in G$, let $T_u(v;G)$ be the subtree of $G$ rooted at node $u$ with the first link in the path from $u$ to $v$ removed. \label{def:subtree}
\end{itemize}




The following lemma provides an important property, the proof of which is omitted due to space constraints.
\begin{lemma}\label{lemma:better_later}
Suppose that $v\in V$ is the infection source. For an observed set of infected nodes $V_I$, let $H$ be the minimum connected subgraph of $G$ that contains $V_I$ and $v$. Then, for any $t \in \cT_v$, and any $u \in H \backslash \{v\}$, the first infection time $t_u^I$ of $u$ is bounded by
\begin{align}
t_u^I \in [d(v,u), t-\max_{x \in T_u(v;H)}d(u,x)], \label{equ:first_infection_time_bound}
\end{align}
Furthermore, in the most likely infection path $\mX{v,t}{V}{0,t}$ (Definition \ref{def:most_likey_path_for_t}), the first infection time for $u$ is given by
\begin{align}
\tilde{t}_u^I = t-\max_{x \in T_u(v;H)}d(u,x). \label{equ:optimal_first_infection_time}
\end{align}
\end{lemma}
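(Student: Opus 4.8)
The plan is to establish the feasibility bound \eqref{equ:first_infection_time_bound} first, and then use an exchange/perturbation argument to show that in the most likely path the first infection time must be pushed to the upper end of that interval, giving \eqref{equ:optimal_first_infection_time}. For the lower bound $t_u^I \geq d(v,u)$, I would argue that the infection spreads at most one hop per time slot: since $v$ is infected only at time $0$ and a node can only be infected if a neighbour was infected in the previous slot, a straightforward induction on $d(v,u)$ shows $u$ cannot be infected before time $d(v,u)$. For the upper bound, I would observe that every node $x$ in the subtree $T_u(v;H)$ lies ``beyond'' $u$ relative to $v$, so on the tree the only path from the source to $x$ passes through $u$; hence $x$ cannot be infected before $t_u^I + d(u,x)$. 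Since $H$ is the minimal connected subgraph containing $V_I$ and $v$, every leaf of $H$ is an infected node that must be infected by time $t$, so in particular the node $x\in T_u(v;H)$ achieving $\max d(u,x)$ must satisfy $t_u^I + d(u,x) \leq t$, which rearranges to the stated upper bound.

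The substantive part is \eqref{equ:optimal_first_infection_time}: in the most likely path, $u$ is infected as late as possible. The idea is a path-surgery argument. Suppose $\mX{v,t}{V}{0,t}$ infects $u$ for the first time at some $\tilde t_u^I < t - \max_{x\in T_u(v;H)} d(u,x)$. I would construct a competing feasible path with strictly larger probability by ``delaying'' the infection of the entire subtree $T_u(v;H)$: keep $u$ (and all of $T_u(v;H)$) susceptible-but-uninfected for the first $\tilde t_u^I - d(v,u)$ slots longer — more precisely, shift the infection schedule on $T_u(v;H)$ forward in time so that $u$ is first infected at $d(v,u)$ is replaced by the later feasible time, while keeping the schedule on the rest of $H$ unchanged. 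The key probabilistic observation is that each factor in $\P(X_V[0,t]\mid s^*=v)$ is of the form $q$ (a susceptible node becoming infected), $1-q$ (a susceptible node staying uninfected), or $1$ (an uninfected, non-susceptible node), and the decisive point is that a node which is infected spends \emph{fewer} time slots contributing a factor of $1$ and \emph{more} contributing factors involving $q$ or $1-q$. One must check that delaying infection replaces a run of factors by a run that is termwise at least as large; since $1-q$ and $q$ are each strictly less than $1$ but the \emph{number} of relevant factors changes in the favourable direction, the net effect is a strict increase. I would make this precise by writing the likelihood as a product over nodes of the per-node transition probabilities along the path and comparing the original and perturbed products node by node: for nodes outside $T_u(v;H)$ nothing changes, and for nodes inside the product only gains.

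The main obstacle I anticipate is making the surgery argument airtight without getting tangled in bookkeeping: one has to verify that the delayed schedule is still \emph{feasible} (it still reproduces $V_I$ at time $t$ and respects the one-hop-per-slot constraint on every branch of $T_u(v;H)$ simultaneously), and that delaying one node does not force an infeasibly late infection of some descendant — this is exactly where the upper bound in \eqref{equ:first_infection_time_bound} is used, since it guarantees there is enough ``slack'' to push $u$'s infection to $t - \max_{x\in T_u(v;H)} d(u,x)$ while still infecting the deepest leaf by time $t$. A secondary subtlety is handling the susceptible-vs-uninfected accounting carefully: when $u$ is infected later, its neighbours in $T_u(v;H)$ become susceptible later, so their ``clock'' of $(1-q)$ factors also shortens, and one should confirm the monotonicity holds jointly over the whole subtree rather than just for $u$. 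Once the single-node delay is shown to (weakly) increase likelihood and the deepest-leaf constraint is shown to be tight, the characterization \eqref{equ:optimal_first_infection_time} follows, and an induction from the leaves of $H$ inward (or a direct global argument) upgrades this to all $u\in H\setminus\{v\}$ at once.
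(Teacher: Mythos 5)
First, a point of reference: the paper explicitly omits the proof of this lemma (``the proof of which is omitted due to space constraints''), so there is no in-paper argument to compare against; I am judging your proposal on its own merits and against the style of the ratio-of-paths computations the authors do carry out for Lemmas \ref{lemma:optimal_t} and \ref{lemma:better_neighbor}. Your treatment of the interval \eqref{equ:first_infection_time_bound} is correct and essentially complete: the one-hop-per-slot induction gives the lower bound, and for the upper bound you correctly use that the maximizing $x \in T_u(v;H)$ is a leaf of $H$, hence lies in $V_I$ by minimality of $H$, hence must be first infected by time $t$, and can only be reached through $u$. The overall surgery strategy for \eqref{equ:optimal_first_infection_time} is also the natural one and matches the interpretation the authors state after the lemma.

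The gap is in the claim that the comparison of the original and perturbed likelihoods works ``node by node'' with ``only gains'' inside $T_u(v;H)$. Writing the likelihood as $\prod_{\tau}\prod_{w\in\Sus(\tau)} q^{X_w[\tau+1]}(1-q)^{1-X_w[\tau+1]}$, the number of nontrivial factors attached to $u$ itself equals the number of slots in which $u$ is susceptible, and that is governed by when $u$'s parent is first infected, not by when $u$ is first infected. Delaying $u$'s first infection by one slot does not shorten $u$'s own run of factors; it swaps a factor of $q$ for a factor of $1-q$ in $u$'s column, a ratio of $(1-q)/q$, which is strictly less than $1$ whenever $q>1/2$. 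So the termwise, per-node comparison for the delayed node can go the wrong way, and the argument as you state it would fail for half the parameter range. The actual source of the gain is precisely what you relegate to a ``secondary subtlety'': each neighbour of $u$ beyond it (in $T_u(v;H)$ or outside $H$) becomes susceptible one slot later and sheds a factor bounded by $\max(q,1-q)<1$; since $u$ has at least one such neighbour in an infinite regular tree, the combined ratio is at least $\frac{1-q}{q}\cdot\frac{1}{1-q}=\frac{1}{q}>1$. This cross-node bookkeeping must be the centrepiece of the calculation, not an afterthought, and it also has to be checked in the boundary cases ($u$ a leaf of $H$, so that the delay is all the way to time $t$, and $u\in H\setminus V_I$, which must additionally recover by time $t$). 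With that repair, your surgery plus the leaves-inward induction does close the proof.
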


Lemma \ref{lemma:better_later} shows that to find the most likely infection path conditioned on $v$ being the infection source, we should choose the the first infection time for any non-source node to be as late as possible.

\begin{lemma}\label{lemma:optimal_t}
Given a non-empty set of infected nodes $V_I$, suppose that $v$ is the infection source. Then,
\begin{enumerate}[(1)]
  \item \label{lemma:optimal_t_feasible_set} the set of all feasible observation times is $\cT_v=[\bar{d}(v,V_I), \infty)$;
  \item \label{lemma:optimal_t_monotonically_decreasing}$\P(\mX{v}{V}{0,t})$ is monotonically decreasing in $t \in \cT_v$; and
  \item \label{lemma:optimal_t_optimal_t} the most likely elapsed time conditioned on $v$ being the infection source is given by $t^v=\bar{d}(v,V_I)$.
\end{enumerate}
\end{lemma}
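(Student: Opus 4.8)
The plan is to establish parts (1) and (2) and then read (3) off them.

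For part (1), the inclusion $\cT_v\subseteq[\bar{d}(v,V_I),\infty)$ is immediate: with a single source at time $0$ and an infection that advances by at most one hop per slot, any node $u$ infected at the observation time $t$ satisfies $d(v,u)\le t$, and maximizing over $u\in V_I$ forces $t\ge\bar{d}(v,V_I)$. For the converse inclusion, fix an integer $t\ge\bar{d}(v,V_I)$ and exhibit a positive-probability path ending in $V_I$ at time $t$. Let $H$ be the minimal subtree spanning $V_I\cup\{v\}$, rooted at $v$, and set $f(u)=\max_{x\in T_u(v;H)}d(u,x)$; note $d(v,u)+f(u)\le\bar{d}(v,V_I)\le t$ (this sum is the distance from $v$ to some node of $H$, and the deepest leaves of $H$ lie in $V_I$) and $f(\parent{u})\ge f(u)+1$. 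Schedule each $u\in H\setminus\{v\}$ to be first infected at slot $t-f(u)$ — its parent is then infected in the preceding slot — to remain infected only as long as needed to pass the infection to its children in $H$, and, if $u\notin V_I$, to recover by time $t$; keep every node of $V_I$ infected through time $t$ and every node off $H$ uninfected throughout. Since $q\in(0,1)$, every transition of this schedule has positive probability, so $t\in\cT_v$.

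For part (2), it is enough to prove $\P(\mX{v}{V}{0,t+1})\le\P(\mX{v}{V}{0,t})$ for every integer $t\ge\bar{d}(v,V_I)$ and then iterate. Write $Y=\mX{v}{V}{0,t+1}$ with successive configurations $C_0,\dots,C_{t+1}$. By Lemma \ref{lemma:better_later}, each $u\in H\setminus\{v\}$ is first infected in $Y$ at slot $(t+1)-f(u)\ge d(v,u)+1\ge 2$ (using $d(v,u)+f(u)\le\bar{d}(v,V_I)\le t$); moreover the most likely path infects no node off $H$ (a pruning argument: infecting an off-$H$ node forces its later recovery and only multiplies the likelihood by extra factors in $(0,1)$), and the infection cannot go extinct before time $t+1$ since it ends at $V_I\neq\emptyset$. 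Hence $C_1=\{v\}=C_0$. Let $Z$ be the length-$t$ path with configurations $C_1,C_2,\dots,C_{t+1}$: it starts at $\{v\}$, ends at $V_I$, and each of its transitions is one of $Y$, so $Z\in\cX_v$ with observation time $t$. By the Markov property $\P(Z\mid s^*=v)=\P(Y\mid s^*=v)/\P(C_1\mid C_0)$, and $\P(C_1\mid C_0)=q(1-q)^{r}\in(0,1)$, where $r$ is the degree of the tree ($v$ stays infected, each of its $r$ neighbors stays uninfected). Therefore $\P(\mX{v}{V}{0,t})\ge\P(Z\mid s^*=v)=\P(\mX{v}{V}{0,t+1})/\big(q(1-q)^{r}\big)>\P(\mX{v}{V}{0,t+1})$, so $\P(\mX{v}{V}{0,t})$ is strictly decreasing on $\cT_v$.

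Part (3) then follows at once: by (1) the feasible times are the ray $[\bar{d}(v,V_I),\infty)$, and by (2) $\P(\mX{v}{V}{0,t})$ is strictly decreasing on it, so the maximum over $t\in\cT_v$ is attained at the left endpoint $t^v=\bar{d}(v,V_I)$. I expect the real obstacle to be the assertion in part (2) that $C_1=\{v\}$ in the most likely length-$(t+1)$ path, i.e.\ that it never infects a neighbor of $v$ at slot $1$: for neighbors in $H$ this is Lemma \ref{lemma:better_later}, but for neighbors off $H$ it rests on the structural fact that the optimal path lives on $H$, which is really where the omitted proof of Lemma \ref{lemma:better_later} is doing the work; the remaining verifications (global consistency of the schedule in part (1)) are routine and reduce to the inequalities $d(v,u)+f(u)\le\bar{d}(v,V_I)$ and $f(\parent{u})\ge f(u)+1$ together with $q<1$.
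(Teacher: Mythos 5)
Your proof is correct and follows essentially the same route as the paper's: both establish part (1) by the one-hop-per-slot bound, prove part (2) by showing (via Lemma \ref{lemma:better_later}) that the optimal $(t+1)$-path has only $v$ infected at slot $1$ and then extracting the factor $q(1-q)^{\deg(v)}<1$ through a time-shift comparison, and deduce part (3). Your version is marginally more careful in two places the paper glosses over — you supply the constructive feasibility half of part (1), and you only need the shifted path $Z$ to be \emph{feasible} for time $t$ rather than asserting it equals the optimal $t$-path — but both arguments lean on the same unproved structural fact (the optimal path never infects nodes off $H$), which you rightly flag as the real content hidden in the omitted proof of Lemma \ref{lemma:better_later}.
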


\begin{proof}
We first prove claim \eqref{lemma:optimal_t_feasible_set}. The infection can propagate at most one hop further from the source node $v$ in one time slot. If $t<\bar{d}(v,V_I)$, the infection can not reach the nodes $V_v(\bar{d}(v,V_I))$, and therefore, it is not possible for $V_v(\bar{d}(v,V_I))$ to become infected. This proves claim \eqref{lemma:optimal_t_feasible_set}.

Next, we show claim \eqref{lemma:optimal_t_monotonically_decreasing}. Fix a $t \in \cT_v$. We compare $\mX{v}{V}{0,t}$ with $\mX{v}{V}{0,t+1}$. We first show that the source node $v$ is susceptible at time slot 1 in $\mX{v}{V}{0,t+1}$, i.e. $v \in \Sus(1)$. This is true because if $v \notin \Sus(1)$, then $v$ and all of its neighboring nodes are uninfected at time slot 1, i.e., $\mX{v,t+1}{V_v(1) \bigcup \{v\}}{1}=0$. This implies that the set of infected nodes $V_I$ is empty as $v$ is the only source in the network. This contradicts our assumption that at least one node is infected.

Since $\mX{v}{V}{0,t} \in \cX_v$, from Lemma \ref{lemma:better_later}, we have that $\mX{v,t+1}{V}{2,t+1}$ corresponds to $\mX{v,t}{V}{1,t}$ and $\mX{v,t+1}{V_v(1)}{1}=0$, so that $\mX{v,t+1}{v}{1}=1$, yielding
\begin{align}
\frac{\P(\mX{v}{V}{0,t+1})}{\P(\mX{v}{V}{0,t})} = &\P(\mX{v,t+1}{v}{1}=1) \cdot \P(\mX{v,t+1}{V_v(1)}{1}=0) \cdot \nonumber \\
&\frac{\P(\mX{v,t+1}{V}{2,t+1}}{\P(\mX{v,t}{V}{1,t})} \nonumber \\
=& q(1-q)^{|V_v(1)|} < 1, \label{equ:time_diff_1}
\end{align}
where $|V_v(1)|$ denotes the number of elements in the set $V_v(1)$, and $\P(\mX{v,t+1}{V}{2,t+1})=\P(\mX{v,t}{V}{1,t})$ because  $\mX{v,t+1}{V}{1,t+1}=\mX{v,t}{V}{0,t}$. From \eqref{equ:time_diff_1} we can see that $\P(\mX{v}{V}{0,t})$ is monotonically decreasing as $t$ increases, which proves claim \eqref{lemma:optimal_t_monotonically_decreasing}. The last claim now follows from claim \eqref{lemma:optimal_t_monotonically_decreasing}, and the proof for Lemma \ref{lemma:optimal_t} is now complete.
\end{proof}

\vspace{-3mm}
\subsection{Source associated with the most likely infection path}
\begin{proposition}\label{prop:tv=tu-1}
Let $H$ be the minimum connected subgraph of $G$ that contains $V_I$. Suppose that $u$ and $v$ are neighboring nodes in $H$ with $\bar{d}(v,V_I)<\bar{d}(u,V_I)$. Let $l=\arg \max_{x\in V_I}d(u,x)$, then we have
\begin{enumerate}[(1)]
  \item \label{prop:tv=tu-1_leaf}$l \in T_v(u;H)$;
  \item \label{prop:tv=tu-1_tv=tu-1}$t^v=d(v,l)=t^u-1$.
\end{enumerate}
\end{proposition}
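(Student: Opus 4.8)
The plan is to prove the two claims in order, using nothing more than the tree structure of $H$ and Lemma~\ref{lemma:optimal_t}, which already identifies $t^v=\bar{d}(v,V_I)$ and $t^u=\bar{d}(u,V_I)$. Two elementary facts about $H$ are used throughout: first, removing the edge $\{u,v\}$ splits the tree $H$ into the component $T_u(v;H)$ (containing $u$) and the component $T_v(u;H)$ (containing $v$), and these two vertex sets partition the nodes of $H$, so $l$ lies in exactly one of them; second, since $G$ is a tree and $H$ is connected, the distance between any two nodes of $H$ is realized by a path lying inside $H$, so all distance arguments may be carried out within $H$.

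For claim~(1) I would argue by contradiction. Suppose $l\notin T_v(u;H)$, so $l\in T_u(v;H)$. Then the path from $v$ to $l$ must traverse the edge $\{u,v\}$, giving $d(v,l)=1+d(u,l)=1+\bar{d}(u,V_I)$, where the last equality is the defining property of $l$. Since $l\in V_I$, this yields $\bar{d}(v,V_I)\ge d(v,l)=1+\bar{d}(u,V_I)>\bar{d}(u,V_I)$, contradicting the hypothesis $\bar{d}(v,V_I)<\bar{d}(u,V_I)$. Hence $l\in T_v(u;H)$.

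For claim~(2) I would invoke claim~(1): since $l\in T_v(u;H)$, the path from $u$ to $l$ passes through $v$, so $d(u,l)=1+d(v,l)$; as $d(u,l)=\bar{d}(u,V_I)=t^u$, this gives $d(v,l)=t^u-1$. It remains to show $\bar{d}(v,V_I)=d(v,l)$, which by Lemma~\ref{lemma:optimal_t} will give $t^v=d(v,l)$. The inequality $\bar{d}(v,V_I)\ge d(v,l)$ holds because $l\in V_I$; for the reverse, the hypothesis together with the computation above gives $\bar{d}(v,V_I)<\bar{d}(u,V_I)=d(v,l)+1$, and since graph distances are integers this forces $\bar{d}(v,V_I)\le d(v,l)$. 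Combining, $t^v=\bar{d}(v,V_I)=d(v,l)=t^u-1$.

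The proof is short, and I do not anticipate a real obstacle. The only steps needing care are the two structural observations in the first paragraph and the integrality argument at the end; the substance of the proposition is simply the one-hop remark that, when $v$ is the endpoint of smaller infection eccentricity, the infected node farthest from $u$ must sit on $v$'s side of the edge $\{u,v\}$, and this single fact drives both parts.
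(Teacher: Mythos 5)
Your proof is correct and follows essentially the same route as the paper's: claim (1) by contradiction using $d(v,l)=d(u,l)+1$ when $l$ lies on $u$'s side of the edge, and claim (2) by squeezing $t^u-1\le t^v<t^u$ and invoking integrality. The only difference is that you spell out the structural facts about the edge removal and the partition into $T_u(v;H)$ and $T_v(u;H)$ more explicitly, which the paper leaves implicit.
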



\begin{proof}
Note that $t^u=\bar{d}(u,V_I)=d(u,l)$ by Lemma \ref{lemma:optimal_t}(\ref{lemma:optimal_t_optimal_t}). If $l \notin T_v(u; H)$, we have $d(v,l)=d(u,l)+1=t^u+1$. From Lemma \ref{lemma:optimal_t}(\ref{lemma:optimal_t_optimal_t}), we have $t^v =\bar{d}(v,V_I) \geq d(v,l)$, so $t^v \geq t^u + 1$, which contradicts the assumption that $t^v<t^u$. This completes the proof of the first claim, which now implies that $d(v,l)=d(u,l)-1=t^u-1$. From Lemma \ref{lemma:optimal_t}(\ref{lemma:optimal_t_optimal_t}), we obtain $t^v \geq d(v,l)$, so that $t^u-1 \leq t^v < t^u$, which gives us $t^v = t^u-1$. This completes the proof for the proposition.
\end{proof}

\begin{lemma}\label{lemma:better_neighbor}
Let $H$ be the minimum connected subgraph of $G$ that contains $V_I$, and let $t^v=\bar{d}(v,V_I)$ for any $v \in H$. Then, for any pair of neighboring nodes $u$ and $v$ in H with $t^v<t^u$, we have
\begin{align*}
\P(\mX{v}{V}{0,t^v})>\P(\mX{u}{V}{0,t^u}).
\end{align*}
\end{lemma}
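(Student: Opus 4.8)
The plan is to compare the two most likely paths $\mX{v}{V}{0,t^v}$ and $\mX{u}{V}{0,t^u}$ directly and show that their likelihood ratio exceeds $1$. First I would apply Proposition~\ref{prop:tv=tu-1}(\ref{prop:tv=tu-1_tv=tu-1}), which gives $t^v=t^u-1$; so the comparison is between a horizon‑$t^v$ path rooted at $v$ and a horizon‑$(t^v+1)$ path rooted at the neighbour $u$, both living on the same graph. Since $u,v\in H$, the minimal subgraph attached to either source is the same $H$, so Lemma~\ref{lemma:better_later} applies verbatim in both cases. The heuristic is that relocating the source from $u$ to its neighbour $v$ while simultaneously shortening the horizon by one is, on balance, favourable.

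Next I would make the link between the two optimal paths explicit. Cutting the edge $(u,v)$ splits $G$ into $T_u(v;G)$ and $T_v(u;G)$, and by Proposition~\ref{prop:tv=tu-1}(\ref{prop:tv=tu-1_leaf}) the node $l$ realizing $\bar{d}(u,V_I)$ lies in $T_v(u;G)$, with $d(v,l)=t^v$ while every other node of $V_I$ is within distance $t^v$ of $v$. Hence Lemma~\ref{lemma:better_later} gives: in $\mX{u}{V}{0,t^u}$ the node $v$ is first infected at time $t^u-t^v=1$; and the elementary combinatorial fact that $T_x(u;H)=T_x(v;H)$ for every $x\neq u,v$ (since $u,v$ are adjacent, the ``first link toward $u$'' and the ``first link toward $v$'' from any other node coincide) forces the prescribed first‑infection times on $T_v(u;G)$ under the two sources to differ by exactly one slot. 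So deleting the initial slot of $\mX{u}{V}{0,t^u}$ reproduces, on $T_v(u;G)$, precisely the most likely $v$‑rooted behaviour of horizon $t^v$. On the other side, Lemma~\ref{lemma:better_later} infects $u$ first, in $\mX{v}{V}{0,t^v}$, at time $t^v-\max_{x\in T_u(v;H)}d(u,x)$, and since every leaf of $T_u(v;H)$ lies in $V_I$ and is therefore within distance $t^v$ of $v$, i.e.\ within distance $t^v-1$ of $u$, we get $\max_{x\in T_u(v;H)}d(u,x)\le t^v-1$; thus $u$ is infected no earlier than time $1$ and the $u$‑side comfortably fits inside the shortened horizon. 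Together these identify $\mX{v}{V}{0,t^v}$ as, in essence, $\mX{u}{V}{0,t^u}$ with its first slot removed and $u$ demoted from source to a node infected at $t^v-\max_{x\in T_u(v;H)}d(u,x)$.

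Finally I would form the ratio $\P(\mX{v}{V}{0,t^v})/\P(\mX{u}{V}{0,t^u})$ by matching the per‑node, per‑slot transition factors. Up to the one‑slot shift the two paths agree on $T_v(u;G)\setminus\{v\}$ and on $T_u(v;G)\setminus\{u\}$ away from the cut, so those factors cancel and the ratio collapses to a product of local terms at $u$, $v$ and their neighbours: a factor $1/q$ gained because $v$ is now the source (infected at time $0$ at no cost) rather than a node that had to be infected, multiplied by the change in the factors around $u$, whose role switches from source to a node infected only at time $t^v-\max_{x\in T_u(v;H)}d(u,x)$. Collecting these and using $q\in(0,1)$ together with $\max_{x\in T_u(v;H)}d(u,x)\le t^v-1$, one shows the product strictly exceeds $1$; a convenient way to organize the computation is to split the ratio as $\frac{\P(\mX{v}{V}{0,t^v})}{\P(\mX{v}{V}{0,t^u})}\cdot\frac{\P(\mX{v}{V}{0,t^u})}{\P(\mX{u}{V}{0,t^u})}$, the first quotient being $1/\bigl(q(1-q)^{|V_v(1)|}\bigr)$ by the computation in the proof of Lemma~\ref{lemma:optimal_t}(\ref{lemma:optimal_t_monotonically_decreasing}), and the second a same‑horizon comparison of the two sources. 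The main obstacle I expect is precisely this local bookkeeping: one must track exactly which transition factors survive near the edge $(u,v)$ — the susceptibility of $u$ and of $v$ is coupled across that cut edge, and, unlike the SIR model, a node may recover and be re‑infected — and then verify the resulting expression is $>1$ for \emph{every} $q\in(0,1)$, not merely in limiting regimes.
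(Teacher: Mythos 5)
Your setup is on the right track and matches the paper's opening moves: you invoke Proposition~\ref{prop:tv=tu-1} to get $t^v=t^u-1$, and you correctly establish that $v$ is first infected at time $1$ in $\mX{u}{V}{0,t^u}$. But the argument never closes, and it stalls exactly where you yourself flag ``the main obstacle'': the bookkeeping around $u$ and $v$. The idea you are missing is SIS-specific and is what makes the paper's computation collapse to a single clean factor: in the most likely $u$-rooted path the source $u$ itself is \emph{uninfected} at time $1$, i.e.\ $\mX{u,t^u}{u}{1}=0$ (it infects $v$ and immediately recovers), and every neighbour of $u$ other than $v$ is also uninfected at time $1$. Hence the state at time $1$ of $\mX{u}{V}{0,t^u}$ is exactly the initial state of the $v$-rooted problem ($v$ is the unique infected node), so the whole suffix satisfies $\mX{u,t^u}{V}{1,t^u}=\mX{v,t^v}{V}{0,t^v}$ --- globally, including at $u$, not merely ``away from the cut.'' The ratio is then exactly
\begin{align*}
\frac{\P(\mX{u}{V}{0,t^u})}{\P(\mX{v}{V}{0,t^v})}=q(1-q)^{|V_u(1)|}<1,
\end{align*}
with no residual local terms left to track. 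Without the observation that $u$ recovers at the end of the first slot, your picture of $u$ being ``demoted to a node infected at time $t^v-\max_{x\in T_u(v;H)}d(u,x)$'' leaves uncancelled transition factors at $u$ across the two paths that you never actually compute.

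Your fallback decomposition does not rescue this. Writing the ratio as $\frac{\P(\mX{v}{V}{0,t^v})}{\P(\mX{v}{V}{0,t^u})}\cdot\frac{\P(\mX{v}{V}{0,t^u})}{\P(\mX{u}{V}{0,t^u})}$ correctly evaluates the first factor as $1/\bigl(q(1-q)^{|V_v(1)|}\bigr)$ via Lemma~\ref{lemma:optimal_t}, but for the product to exceed $1$ you must show that the second, same-horizon quotient is at least $q(1-q)^{|V_v(1)|}$, and comparing two different sources at the same horizon is essentially the original problem in disguise; you give no argument for it. So the proposal has the right skeleton (the one-slot shift between the two optimal paths) but a genuine gap at the decisive step.
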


\begin{proof}
Denote the first infection time of $v$ in the infection path $\mX{u}{V}{0,t^u}$ as $t_v^I$. We first show that $t_v^I=1$ in the infection path $\mX{u}{V}{0,t^u}$. Conditioned on node $u$ being the infection source, the infection can propagate at most $t^u-t_v^I$ hops away from node $v$ within the subtree $T_v(u;H)$. From Proposition \ref{prop:tv=tu-1}(\ref{prop:tv=tu-1_tv=tu-1}), if $t_v^I>1$, we have $d(v,l)=t^u-1>t^u-t_V^I$, for $l=\arg \max_{x\in V_I}d(u,x)$. In other words, the infection can not reach node $l$, which is a contradiction. Therefore, we must have $t_v^I=1$ in the infection path $\mX{u}{V}{0,t^u}$.

Using the same arguments as in the proof of Lemma \ref{lemma:optimal_t}, one can show that $\mX{u,t^u}{V}{2,t^u}$  corresponds to $\mX{v,t^v}{V}{1,t^v}$, with $ \mX{u,t^u}{V_u(1)\backslash \{v\}}{1}=0$ and $\mX{u,t^u}{u}{1}=0$. We then obtain
\begin{align*}
\frac{\P(\mX{u}{V}{0,t^u})}{\P(\mX{v}{V}{0,t^v})}  = &\frac{\P(\mX{u,t^u}{V}{2,t^u})}{\P(\mX{v,t^v}{V}{1,t^v})} \cdot \P(\mX{u,t^u}{v}{1}=1) \cdot \\
& \P(\mX{u,t^u}{u}{1}=0) \cdot \P(\mX{u,t^u}{V_u(1)\backslash \{v\}}{1}=0) \\
= & q(1-q)^{|V_u(1)|} < 1,
\end{align*}
where $\P(\mX{u,t^u}{V}{2,t^u})=\P(\mX{v,t^v}{V}{1,t^v})$ because $\mX{u,t^u}{V}{1,t^u}\\=\mX{v,t^v}{V}{0,t^v}$. The proof for Lemma \ref{lemma:better_neighbor} is now complete.
\end{proof}


%

We have the following result based on Lemma \ref{lemma:optimal_t} and Lemma \ref{lemma:better_neighbor}.
\begin{theorem}\label{theorem:single_source_estiamte_Jordan_infection_center}
If $V_I$ is an observed infection set in an infinite tree, the estimator in \eqref{equ:proposed_single_source_estimator} is given by
\begin{align}
\hat{s} \in \arg \min_{v \in V} \bar{d}(v,V_I), \label{equ:Jordan_infection_center}
\end{align}
i.e., the infection source associated with the most likely infection path is a Jordan infection center (cf.\ Definition \ref{def:infection_eccentricity}).
\end{theorem}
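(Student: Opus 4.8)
The plan is to collapse the two-step definition of $\hat s$ into a single maximization over the source node, and then show by a monotone descent on the tree that this maximizer must be a Jordan infection center. First, fix a candidate source $v$. By Lemma~\ref{lemma:optimal_t}(\ref{lemma:optimal_t_feasible_set}) the inner range $\cT_v$ in \eqref{equ:proposed_single_source_estimator} equals $[\bar d(v,V_I),\infty)$, and by Lemma~\ref{lemma:optimal_t}(\ref{lemma:optimal_t_monotonically_decreasing})--(\ref{lemma:optimal_t_optimal_t}) the probability $\P(\mX{v}{V}{0,t})$ is maximized at $t=t^v:=\bar d(v,V_I)$. Hence $\max_{t\in\cT_v,\,X_V[0,t]\in\cX_v}\P(X_V[0,t]\mid s^*=v)=g(v)$, where $g(v):=\P(\mX{v}{V}{0,t^v})$, so that $\hat s\in\arg\max_{v\in V}g(v)$, and it suffices to show that every maximizer of $g$ over $V$ minimizes $\bar d(\cdot,V_I)$.

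The driving observation is Lemma~\ref{lemma:better_neighbor}: whenever $u$ and $v$ are neighbors with $\bar d(u,V_I)<\bar d(v,V_I)$, one has $g(u)>g(v)$; that is, $g$ strictly increases when we move to a neighbor of strictly smaller infection eccentricity. As stated the lemma assumes $u,v\in H$, the minimal connected subgraph containing $V_I$, so I would first note that $\hat s$ cannot lie outside $H$: any $v\notin H$ has a neighbor strictly closer to every node of $V_I$ (the neighbor on the path from $v$ into $H$), and the proof of Lemma~\ref{lemma:better_neighbor} --- which invokes only the tree structure, Lemma~\ref{lemma:optimal_t}, and Proposition~\ref{prop:tv=tu-1}, and whose distance arguments do not essentially use $u,v\in H$ --- applies to that pair as well. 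After this reduction the lemma may be used verbatim within $H$.

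Next I would invoke the standard fact that on a tree the map $v\mapsto\bar d(v,V_I)$ is unimodal along every path, so its minimizers --- the Jordan infection centers --- form a single vertex or an adjacent pair, and, crucially, \emph{every} vertex $v$ that is not a Jordan infection center has a neighbor $u$ with $\bar d(u,V_I)=\bar d(v,V_I)-1$, namely the neighbor of $v$ on the path toward a farthest infected node; indeed, if no neighbor strictly decreased $\bar d(\cdot,V_I)$, a short case analysis according to which subtree of $v$ a given vertex falls into would force $v$ itself to be a global minimizer. Now start from any non-center $v$ and repeatedly step to such a strictly-decreasing neighbor: the infection eccentricities along the walk are strictly decreasing nonnegative integers, so the walk is finite and must terminate at a vertex with no strictly-decreasing neighbor, i.e.\ at a Jordan infection center $c$. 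Combining this with the previous paragraph, $g$ strictly increases at each step, so $g(v)<g(c)$. Hence no non-center can maximize $g$, and therefore any maximizer $\hat s$ is a Jordan infection center, which is precisely \eqref{equ:Jordan_infection_center}.

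The routine parts are the reduction in the first paragraph and the bookkeeping of the descent. I expect the two places needing care to be (i) the tree-structural claim that a non-center always has a neighbor of strictly smaller infection eccentricity, and (ii) checking that the neighbor comparison of Lemma~\ref{lemma:better_neighbor} really is available for adjacent pairs outside $H$ (equivalently, that the optimal source lies in $H$). Of these, (ii) is the main obstacle, since it is the one point where the hypothesis ``$u,v\in H$'' in the lemma as stated does not literally cover the case at hand.
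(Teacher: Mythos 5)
Your proposal is correct and follows essentially the same route as the paper: reduce to comparing $\P(\mX{v}{V}{0,t^v})$ over $v$ via Lemma~\ref{lemma:optimal_t}, then chain the neighbor comparison of Lemma~\ref{lemma:better_neighbor} along the tree path joining an arbitrary node to the Jordan center, using strict monotonicity of $\bar{d}(\cdot,V_I)$ along that path (you walk inward from a non-center, the paper walks outward from $\hat{s}$, and the paper proves in detail, via its $l$ and $l'$ argument, the tree-structural fact you cite as standard). Your explicit treatment of candidate sources outside $H$ addresses a point the paper's proof silently restricts away, but this is a refinement rather than a different approach.
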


\begin{proof}
It is easy to see that if $G$ is a tree, then there are at most two Jordan infection centers for $V_I$. In addition, if there are indeed two Jordan infection centers, they are neighboring nodes\cite{Zhu2012}. With out loss of generality, we assume there is only one Jordan infection center $\hat{s}$. (When there are two Jordan infection centers, we can treat them as a single virtual node.) For any node $v_k \in H \backslash \{\hat{s}\}$, we denote the path from $\hat{s}$ to $v_k$ as $[\hat{s}, v_1, v_2, \cdots, v_k]$ , where $k \geq 1$. We want to show that
\begin{align}
\P(\mX{\hat{s}}{V}{0,t^{\hat{s}}}) > \P(\mX{v_k}{V}{0,t^{v_k}}). \label{equ:Jordan_infection_center_optimal}
\end{align}

Fix a $l \in V_I$ such that $d(\hat{s},l)=\bar{d}(\hat{s},V_I)$. Let $u$ denote the neighboring node of $\hat{s}$ on the path from $\hat{s}$ to $l$. Consider a node $l'$, where $l'=\arg \max_{v\in V_I \backslash T_u(\hat{s};H)}d(\hat{s},v)$. We first show that $d(\hat{s},l') \geq d(\hat{s},l)-1$. This is true because if $d(\hat{s},l') \leq d(\hat{s},l)-2$,
\begin{align*}
\bar{d}(u,V_I) &= \max \left(d(u,l'),d(u,l)\right) \\
&=\max \left(d(\hat{s},l')+1,d(\hat{s},l)-1\right) \\
&=d(\hat{s},l)-1.
\end{align*}
In the last line, we have used the inequality that $d(\hat{s},l')+1 \leq d(\hat{s},l)-2 +1 = d(\hat{s},l)-1$. We find a node $u$ that has infection eccentricity less than that of $\hat{s}$, a contradiction. 

Note that $l$ could be either in the subtree $T_{v_1}(\hat{s};H)$ or not, we shall look into these two possible cases. When $l \notin T_{v_1}(\hat{s};H)$, it is easy to see that $\bar{d}(v_i,V_I)=d(\hat{s},l)+i$ for $1 \leq i \leq k$. When $l \in T_{v_1}(\hat{s};H)$, suppose that $\bar{d}(v_{i+1},V_I) \leq \bar{d}(v_{i},V_I)$ for some $i \in [1,k-1]$. Let $\tilde{l}$ be a node such that $d(v_i, \tilde{l}) = \bar{d}(v_{i},V_I)$. Then, we must have $\tilde{l} \in T_{v_{i+1}}(v_i;H)$, otherwise we have a contradiction. We then have
\begin{align*}
\bar{d}(v_{i+1},V_I)
&\geq d(v_{i+1},l') \\
&= d(\hat{s},l') + i + 1\\
&\geq i+d(\hat{s},l)\\
&\geq 2i + d(v_i,\tilde{l}) \\
& > d(v_i,\tilde{l}),
\end{align*}
a contradiction. Therefore, we have $ \bar{d}(v_{i},V_I)<\bar{d}(v_{i+1},V_I)$ for $i \in [1,k-1]$. Furthermore, we have $ \bar{d}(\hat{s},V_I)<\bar{d}(v_{1},V_I)$ by assumption. By repeatedly applying Lemma \ref{lemma:better_neighbor} and Lemma \ref{lemma:optimal_t}(\ref{lemma:optimal_t_optimal_t}), we can show \eqref{equ:Jordan_infection_center_optimal} is true for both cases, and the proof for Theorem \ref{theorem:single_source_estiamte_Jordan_infection_center} is now complete.
\end{proof}
\vspace{-1mm}
Theorem \ref{theorem:single_source_estiamte_Jordan_infection_center} shows that the optimal estimator in \eqref{equ:proposed_single_source_estimator} is given by a Jordan infection center. Note that this is the same result if an SIR infection process is assumed \cite{Zhu2012}. We have therefore shown that using Jordan infection centers is relatively robust to the underlying assumptions governing the infection and recovery of nodes in the network.

\vspace{-4mm}
\section{Simulation Results}\label{sec:simulation_results}
\vspace{-2mm}

In this section, we present simulation results on regular trees to evaluate the performance of the proposed estimator. An efficient algorithm has been described in \cite{Zhu2012} to find the Jordan infection center, which we call optimal infection path (OIP) algorithm. We refer the reader to \cite{Zhu2012} for details of OIP. The benchmark is the minimum DC based estimator that is proved in \cite{Shah2011} to be the maximum likelihood infection source estimator for regular trees in SI models.

In each simulation run, we let the underlying network $G$ be a sufficiently large regular tree, so that $G$ can be treated as an infinite tree. We set the degree of the regular trees to be $2,3,4,5$ or $6$. For each degree, we perform 1000 simulation runs. We randomly choose a node as the infection source and let the infection spread out using the SIS model. The infection probability $q$ is chosen uniformly from $(0,1)$. We observe the infection graph after $t$ time slots, where $t$ is chosen uniformly from $[3,5]$. We run the OIP algorithm and the DC algorithm on the observed graph for the proposed estimator and the DC based estimator respectively. Figure \ref{fig:regular_tree_detection_rate} shows the detection rate (percentage of times that the estimator correctly finds the infection source) of both estimators. We can see that the proposed estimator has higher detection rate than the DC based estimator for all kinds of regular trees. Error distance is defined as the distance between the estimate and the infection source. Figure \ref{fig:error_distance_histogram_regular_tree_degree4} shows the histogram of the error distances of both estimators for regular trees with degree 4 (similar results are obtained for other degrees). We see that the proposed estimator has smaller error distance on average.
\vspace{-.3cm}
\begin{figure}[!ht] 
  \centering
  \includegraphics[width=0.42\textwidth]{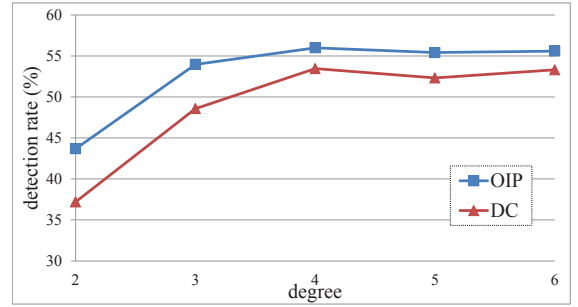}
  \caption{Detection rate of optimal infection path (OIP) based estimator and distance centrality (DC) based estimator for regular trees with various degrees.}
  \label{fig:regular_tree_detection_rate}
\end{figure}
\vspace{-.6cm}
\begin{figure}[!ht] 
  \centering
  \includegraphics[width=0.42\textwidth]{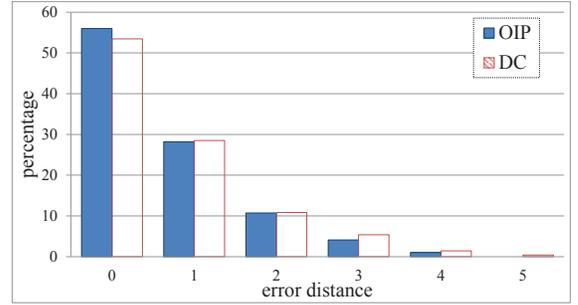}
  \caption{Histogram of error distances of optimal infection path (OIP) based estimator and distance centrality (DC) based estimator for regular trees with degree 4.}
  \label{fig:error_distance_histogram_regular_tree_degree4}
\end{figure}

\vspace{-4mm}
\section{Conclusion}\label{sec:conclusion}
\vspace{-2mm}

We have derived an infection source estimator for an SIS model that identifies the node associated with the most likely infection path. We showed that the estimator is a Jordan infection center. Simulation results on regular trees indicate that our estimator outperforms the minimum distance centrality based estimator, which is proved to be the maximum likelihood estimator for the SI model. In this paper, we make the assumption that there is only one infection source in an infinite regular tree. However, there may exist multiple infection sources in practical applications \cite{Luo2012Secon, Luo2012CoRR}. Future work includes identifying multiple infection sources for the SIS model in a general network.

\vfill\pagebreak
\bibliography{IEEEabrv,SIS}{}
\bibliographystyle{IEEEtran}

\end{document}